\def\qed{\leavevmode\unskip\penalty9999 \hbox{}\nobreak\hfill
     \quad\hbox{\leavevmode  \hbox to.77778em{%
               \hfil\vrule   \vbox to.675em%
               {\hrule width.6em\vfil\hrule}\vrule\hfil}}
     \par\vskip3pt}
\newtheorem{remark}{Remark}
\newtheorem{theorem}{Theorem}
\newtheorem{corollary}{Corollary}
\newtheorem{lemma}{Lemma}
\newtheorem{example}{Example}
\def\be{\begin{eqnarray}}
\def\ee{\end{eqnarray}}
\def\ba{\begin{array}{l}}
\def\ea{\end{array}}
\begin{document}

\title{On monogamy and polygamy relations of multipartite systems}
\author{Xia Zhang$^1$, Naihuan Jing$^{2}$, Ming Liu$^1$, Haitao Ma$^3$\\ 
$^{1}${School of Mathematics, South China University of Technology, Guangzhou 510641, China}\\  
$^{2}${Department of Mathematics, North Carolina State University, Raleigh NC 27695, USA}\\ 
$^{3}${College of Mathematical Science, Harbin Engineering University, Harbin 150001, China} \\
$^\dag$
Corresponding author. E-mail: jing@ncsu.edu}

\begin{abstract}
\baselineskip18pt
We study the monogamy and polygamy relations related to quantum correlations for multipartite quantum systems in a unified manner. It is known that any bipartite measure obeys monogamy and polygamy relations for the $r$-power of the measure. We show in a uniformed manner that the generalized monogamy and polygamy relations are transitive to other powers of the measure in weighted forms.
We demonstrate that our weighted monogamy and polygamy relations are stronger than recently available relations.
Comparisons are given in detailed examples which show that our results are stronger in both situations.

\keywords{Genuine multipartite entanglement \and Correlation tensor \and Weyl operators}
\end{abstract}

\maketitle

\section{Introduction}
Monogamy relations of quantum entanglement are important feature of quantum physics that play an important role in quantum information and quantum communication. Monogamy relations confine the entanglement of a quantum system with the other (sub)systems, thus they are closely related to quantum information processing tasks such as security analysis of quantum key distribution \cite{MP}.	

The monogamy relation for a three-qubit state $\rho_{A_1A_2A_3}$ is defined \cite{MK} as
$$\mathcal{E}(\rho_{A_1|A_2A_3})\geq \mathcal{E}(\rho_{A_1A_2}) +\mathcal{E}(\rho_{A_1A_3}),$$
where $\mathcal{E}$ is a bipartite entanglement measure, $\rho_{A_1A_2}$ and $\rho_{A_1A_3}$ are the reduced density matrices of $\rho_{A_1A_2A_3}$. The monogamy relation was generalized to multiqubit quantum systems, high-dimensional quantum systems in
general settings \cite{ZXN,JZX,jll,012329,gy1,gy2,jin1,jin2, SC, RWF, ZYS}.
The first polygamy relation of entanglement was established in \cite{gg} for some three-qubit system as the inequality ${E_a}_{A_1|A_2A_3}\leq {E_a}_{A_1A_2} +{E_a}_{A_1A_3}$ where ${E_a}_{A_1|A_2A_3}$ is the assisted entanglement \cite{gg} between $A_1$ and $A_2A_3$ and later generalized to some multiqubit systems in \cite{jsb,jin3}. General polygamy inequalities of multipartite entanglement were also given in \cite{062328,295303,jsb,042332} in terms of entanglement of assistance. While it is
known that the monogamy relation does not hold for all quantum systems, it is also shown \cite{GG} that
any monotonic bipartite measure is monogamous on pure tripartite states.

It turns out that a generalized monogamy relation always holds for any quantum system. In \cite{JZX, jll,ZXN},  multiqubit monogamy relations have been demonstrated for the $x$th power of the entanglement of
formation ($x\geq\sqrt{2}$) and the concurrence ($x\geq2$), which opened new direction to study the monogamy relation. Similar general polygamy relation
has been shown for R\'ennyi-$\alpha$ entanglement \cite{GYG}. Monogamy relations for quantum steering have also been shown in \cite{hqy,mko,jk1,jk2,jk3}.
Moreover, polygamy inequalities were given in terms of the
$\alpha$th ($0\leq\alpha\leq 1$) power of square of convex-roof extended negativity (SCREN) and the entanglement of assistance \cite{j012334, 042332}. 

An important feature for this generalized monogamy relation for $\alpha$th power of the measure is its transitivity in the sense that other power of the measure satisfies a weighted monogamous relation. Recently, the authors in \cite{JFQ} provided a class of monogamy and polygamy relations of the $\alpha$th $(0\leq\alpha\leq r,r\geq2)$ and the $\beta$th $(\beta\geq s,0\leq s\leq1)$ powers for any quantum correlation. Applying the monogamy relations in \cite{JFQ} to quantum correlations like squared convex-roof extended negativity, entanglement of formation and concurrence one can get tighter monogamy inequalities than those given in \cite{zhu}. 
Similarly applying the bounds in \cite {JFQ}
 to specific quantum correlations such as the concurrence of assistance, square of convex-roof extended negativity of assistance (SCRENoA), entanglement of assistance,
 corresponding polygamy relations were obtained, which are complementary to the existing ones \cite{jin3,062328,295303,jsb,042332} with different regions of parameter $\beta$. In \cite{ZJZ1,ZJZ2}, the authors gave another set of monogamy relations for  $(0\leq\alpha\leq \frac{r}{2},r\geq2)$ and polygamy relations for $(\beta\geq s,0\leq s\leq1)$, and we note that the bound is stronger than \cite{JFQ} in monogamy case but weaker in polygamy case.

One realizes that the monogamy and polygamy relations given in \cite{JFQ, ZJZ1,ZJZ2} were obtained by bounding the function $(1+t)^x$ by various estimates. In this paper, we revisit the function $(1+t)^x$ and give a unified and improved method to estimate
its upper and lower bounds, which then lead to new improved monogamy and polygamy relations stronger than some of the recent strong ones in both situations.
For instance, we will rigorously show the monogamy and polygamy relations of quantum correlations for the cases $(0\leq\alpha\leq r,r\geq2)$ and $(\beta\geq s,0\leq s\leq1)$ are tighter than those given in \cite{JFQ, ZJZ1,ZJZ2} all together. We also use the concurrence and SCRENoA as examples to demonstrate how our bounds have improved
previously available strong bounds.

\section{Monogamy relations of quantum correlations}

Let $\rho$ be a density matrix on a multipartite quantum system $\bigotimes_{i=1}^nA_i$, and let
$\mathcal{Q}$ be a measure of quantum correlation for any bipartite (sub)system. If $\mathcal{Q}$ satisfies \cite{ARA} the inequality
\begin{eqnarray}\label{q}
&&\mathcal{Q}(\rho_{A_1|A_2,\cdots,A_{n}})\nonumber\\
&&\geq\mathcal{Q}(\rho_{A_1A_2})+\mathcal{Q}(\rho_{A_1A_3})+\cdots+\mathcal{Q}(\rho_{A_1A_{n}}),
\end{eqnarray}
$\mathcal Q$ is said to be {\it monogamous}, where $\rho_{A_1A_i}$, $i=2,...,n$, are the reduced density matrices of $\rho$. For simplicity, we denote $\mathcal{Q}(\rho_{A_1A_i})$ by $\mathcal{Q}_{A_1A_i}$, and $\mathcal{Q}(\rho_{A_1|A_2,\cdots,A_{n}})$ by $\mathcal{Q}_{A_1|A_2,\cdots,A_{n}}$. It is known that some quantum measures obey the monogamous relation \cite{AKE,SSS} for certain quantum states, while
there are quantum measures which do not satisfy the monogamy relation \cite{GLGP, RPAK}.

In Ref. \cite{SPAU,TJ,YKM}, the authors have proved that there exists $r\in \mathbb R~(r\geq2)$ such that the $r$th power of any measure $\mathcal{Q}$ satisfies the following generalized monogamy relation for arbitrary dimensional tripartite state \cite{SPAU}:
\begin{eqnarray}\label{aqq}
\mathcal{Q}^r_{A_1|A_2A_3}\geq\mathcal{Q}^r_{A_1A_2}+\mathcal{Q}^r_{A_1A_3}.
\end{eqnarray}

Assuming \eqref{aqq}, we would like to
prove that other power of $\mathcal Q$ also obeys a weighted monogamy relation. First of all, using the inequality $(1+t)^{x} \geq 1+t^{x}$ for $x \geq 1,0 \leq t \leq 1$ one can easily
derive the following generalized polygamy relation for the $n$-partite case,
$$
\mathcal{Q}_{A_1 \mid A_{2}, \ldots, A_{n}}^{r} \geq \sum_{i=2}^{n} \mathcal{Q}_{A_1 A_{i}}^{r}
$$

We now try to generalize the monogamy relation to other powers. Let's start with a useful lemma.
\begin{lemma}\label{lem:1} Let $a\geq 1$ be a real number.  Then for $t\geq a\geq 1$, the function $(1+t)^x$ satisfies the following inequality
\begin{equation}
(1+t)^x\geq (1+a)^{x-1}+(1+\frac{1}{a})^{x-1}t^x,
\end{equation}
where $0<x\leq 1$.
\end{lemma}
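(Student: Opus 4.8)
The plan is to treat this as a one-variable calculus problem in $t$, holding $a$ and $x$ fixed, by defining the auxiliary function
\begin{equation}
f(t) = (1+t)^x - (1+a)^{x-1} - \Bigl(1+\tfrac{1}{a}\Bigr)^{x-1} t^x
\end{equation}
and showing $f(t)\geq 0$ for all $t\geq a$. The first thing I would check is the boundary case $t=a$, which I expect to give equality. Indeed, writing $\bigl(1+\tfrac1a\bigr)^{x-1} = \tfrac{(1+a)^{x-1}}{a^{x-1}}$ one finds $\bigl(1+\tfrac1a\bigr)^{x-1}a^x = a(1+a)^{x-1}$, so the right-hand side collapses to $(1+a)^{x-1}+a(1+a)^{x-1} = (1+a)^x$, matching the left-hand side. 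Thus $f(a)=0$, and it suffices to prove that $f$ is non-decreasing on $[a,\infty)$.

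Next I would compute the derivative, where the constant term $(1+a)^{x-1}$ drops out:
\begin{equation}
f'(t) = x(1+t)^{x-1} - x\Bigl(1+\tfrac{1}{a}\Bigr)^{x-1} t^{x-1} = x\Bigl[(1+t)^{x-1} - \bigl((1+\tfrac{1}{a})t\bigr)^{x-1}\Bigr],
\end{equation}
using $\bigl(1+\tfrac1a\bigr)^{x-1}t^{x-1} = \bigl((1+\tfrac1a)t\bigr)^{x-1}$. The crux is then a comparison of the two bases inside the bracket: since $(1+\tfrac1a)t = t + \tfrac{t}{a}$ and the hypothesis $t\geq a\geq 1$ forces $\tfrac{t}{a}\geq 1$, we get $(1+\tfrac1a)t \geq t+1 = 1+t$.

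The step I expect to carry the argument is exploiting the sign of the exponent: because $0<x\leq 1$ gives $x-1\leq 0$, the map $u\mapsto u^{x-1}$ is non-increasing on $(0,\infty)$, so the larger base $\bigl(1+\tfrac1a\bigr)t$ yields the smaller power, i.e. $\bigl((1+\tfrac1a)t\bigr)^{x-1} \leq (1+t)^{x-1}$. Hence the bracket is non-negative and $f'(t)\geq 0$ throughout $[a,\infty)$. Combining $f(a)=0$ with monotonicity gives $f(t)\geq 0$ for all $t\geq a$, which is precisely the claimed inequality. The only subtlety to state carefully is that the direction of the power-function comparison is reversed by the negative exponent, so I would flag explicitly that the inequality $(1+\tfrac1a)t \geq 1+t$ requires $t\geq a$ (equivalently $t/a\geq 1$) — this is exactly where the hypothesis $t\geq a\geq 1$ is used, and it is also what pins the equality point to $t=a$.
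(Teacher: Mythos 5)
Your proof is correct and takes essentially the same route as the paper's: both reduce the claim to a one-variable monotonicity argument whose derivative sign rests on the fact that $u\mapsto u^{x-1}$ is non-increasing when $0<x\leq 1$. The only difference is cosmetic --- the paper studies $g(y)=(1+y)^x-(1+a)^{x-1}y^x$ in the reciprocal variable $y=1/t$ on $(0,\tfrac{1}{a}]$ and bounds it below by its value at $y=\tfrac{1}{a}$, whereas you differentiate directly in $t$ with the equality case pinned at $t=a$.
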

\begin{proof}.
Consider $g(x,y)=(1+y)^x-(1+{a})^{x-1}y^x$, $0<y\leq \frac{1}{a}$, $0\leq x\leq 1$. Then
$$\frac{\partial g}{\partial y}=xy^{x-1}\left((1+\frac{1}{y})^{x-1}-(1+a)^{x-1}\right).$$
Let $h(x,y)=(1+\frac{1}{y})^{x-1}$, $0<y\leq \frac{1}{a}$, $0\leq x\leq 1$. Since $h(x,y)$ is an increasing function of $y$, we have $h(x,y)\leq h(x,\frac{1}{a})=(1+{a})^{x-1}$.
Thus $\frac{\partial g}{\partial y}\leq 0$ and $g(x,y)$ is decreasing with respect to $y$. Therefore we have $g(x,y)\geq g(x,\frac{1}{a})=(1+\frac{1}{a})^{x-1}$. Subsequently
\begin{equation*}
g(x,\frac{1}{t})=\frac{(1+t)^x}{t^x}-\frac{(1+{a})^{x-1}}{t^x}\geq (1+\frac{1}{a})^{x-1}
\end{equation*}
for $t\geq a$. Then we have
\begin{equation*}
(1+t)^x\geq(1+{a})^{x-1}+(1+\frac{1}{a})^{x-1}t^x,
\end{equation*}
for $t\geq a$.

\end{proof}
\begin{remark}\label{rem:1}~~
The proof of Lemma \ref{lem:1} implies that for $0\leq x\leq 1$, $t\geq a\geq 1$ and $f(x)\geq (1+{a})^{x-1} $, we have
\begin{equation*}
(1+t)^x\geq f(x)+\frac{(1+a)^x-f(x)}{a^x}t^x.
\end{equation*}

It is not hard to see that for $0\leq x\leq 1$, $t\geq a\geq 1$ and for $f(x)\geq (1+{a})^{x-1}$,
$$(1+{a})^{x-1}+(1+\frac{1}{a})^{x-1}t^x-
\left[f(x)+\frac{(1+a)^x-f(x)}{a^x}t^x\right]\geq 0.$$
In fact, 
\begin{equation*}
\begin{aligned}
{ LHS} 
&=(1+{a})^{x-1}-f(x)+\Big(\frac{a(1+a)^{x-1}}{a^x}t^{x}-\frac{(1+a)(1+a)^{x-1}-f(x)}{a^x}t^{x}\Big)\\
&=(1+{a})^{x-1}-f(x)+\frac{f(x)-(1+{a})^{x-1}}{a^x}t^{x}\\
&=(\frac{t^x}{a^x}-1)(f(x)-(1+{a})^{x-1})\geq 0.
\end{aligned}
\end{equation*}
\end{remark}
\begin{remark} \label{rem:2}
In \cite[Lemma 1]{JFQ}, the authors have given a lower bound of $(1+t)^x$ for $0\leq x\leq 1$ and $t\geq a\geq 1$:
\begin{equation*}
(1+t)^{x} \geq 1+\frac{(1+a)^{x}-1}{a^{x}} t^{x}.
\end{equation*}
This is a special case of our result in Remark \ref{rem:1}. In fact, let $f(x)=1\geq (1+{a})^{x-1}$ for $0\leq x\leq 1$ in Remark \ref{rem:1}, then the inequality
descends to theirs. Therefore our lower bound of $(1+t)^x$ is better than that of \cite{JFQ}, consequently
any monogamy relations based on Lemma \ref{lem:1} are better than those given in \cite{JFQ} based on Lemma 1 of \cite{JFQ}.
\end{remark}

\begin{remark}\label{rem:ZJZ1}
In \cite[Lemma 1]{ZJZ1}, the authors gave another lower bound of $(1+t)^x$ for $0\leq x\leq \frac{1}{2}$ and $t\geq a\geq 1$
\begin{equation*}
(1+t)^{x} \geq p^{x}+\frac{(1+a)^{x}-p^{x}}{a^{x}} t^{x},
\end{equation*}
where $\frac{1}{2}\leq p\leq 1$.
This is a also special case of our Remark \ref{rem:1} where $f(x)=p^{x}$ for $0\leq x\leq \frac{1}{2}$ and $t\geq a\geq 1$.
Since $(1+a)^{x-1}\leq p^x$ for $0\leq x\leq \frac{1}{2}$ and $\frac{1}{2}\leq p\leq 1$,
therefore our lower bound of $(1+t)^x$ for $0\leq x\leq \frac{1}{2}$ and $t\geq a\geq 1$ is stronger than that given in \cite{ZJZ1}.
Naturally our monogamy relations based on Lemma \ref{lem:1} will outperform those given in \cite{ZJZ1} based on Lemma 1 of \cite{ZJZ1}.
\end{remark}
\begin{remark}\label{rem:ZJZ2}
In \cite[Lemma 1]{ZJZ2}, the following lower bound was given: $(1+t)^x$ for $0\leq x\leq \frac{1}{2}$ and $t\geq a\geq 1$
\begin{equation*}
(1+t)^{x} \geq (\frac{1}{2})^{x}+\frac{(1+a)^{x}-(\frac{1}{2})^{x}}{a^{x}} t^{x},
\end{equation*}
which is the special case  $p=\frac{1}{2}$ of \cite{ZJZ1}.
therefore our lower bound of $(1+t)^x$ for $0\leq x\leq \frac{1}{2}$ and $t\geq a\geq 1$ is better than the one given in \cite{ZJZ2}.
Thus, our monodamy relations based on Lemma \ref{lem:1} are better than the ones given in \cite{ZJZ2} based on  \cite[Lemma 1]{ZJZ2}.
\end{remark}

\begin{lemma}\label{lem:2}
Let $p_i$ $(i=1,\cdots, n)$ be nonnegative numbers arranged as $p_{(1)}\geq p_{(2)}\geq ...\geq p_{(n)}$
for a permutation $(1)(2)\cdots (n)$ of $12\cdots n$.
If $p_{(i)}\geq a p_{(i+1)}$ for $i=1,...,n-1$,
we have
\begin{equation}\label{eq:3}
\left(\sum_{i=1}^n p_i\right)^x\geq (1+{a})^{x-1}\sum_{i=1}^n \left((1+\frac{1}{a})^{x-1}\right)^{n-i}p_{(i)}^x,
\end{equation}
for $0\leq x\leq 1$.

\end{lemma}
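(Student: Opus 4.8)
The plan is to prove \eqref{eq:3} by induction on $n$, using Lemma \ref{lem:1} as the single engine of the argument. Throughout I write $q_i:=p_{(i)}$, so that $q_1\geq q_2\geq\cdots\geq q_n\geq 0$ and $q_i\geq a\,q_{i+1}$ for $1\leq i\leq n-1$. For $n=1$ the claimed inequality reads $q_1^x\geq (1+a)^{x-1}q_1^x$, which holds because $(1+a)^{x-1}\leq 1$ for $a\geq 1$ and $0\leq x\leq 1$; this disposes of the base case.

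For the inductive step I would peel off the \emph{smallest} term rather than the largest, which is the key structural choice. Set $U:=\sum_{i=1}^{n-1}q_i$ and observe that, since $U$ contains the summand $q_{n-1}$ together with nonnegative terms, $U\geq q_{n-1}\geq a\,q_n$. Hence the ratio $t:=U/q_n$ satisfies $t\geq a\geq 1$ (assuming $q_n>0$; the degenerate case $q_n=0$ makes the final summand vanish and the inequality follows a fortiori from the $(n-1)$-term case, since each weight $(1+\frac1a)^{x-1}\leq 1$). Writing $\bigl(\sum_{i=1}^{n}q_i\bigr)^x=(U+q_n)^x=q_n^x\,(1+t)^x$ and applying Lemma \ref{lem:1} gives
\begin{equation*}
\Bigl(\sum_{i=1}^{n}q_i\Bigr)^x\geq (1+a)^{x-1}q_n^x+\Bigl(1+\tfrac{1}{a}\Bigr)^{x-1}U^x,
\end{equation*}
where the factor $q_n^x\,t^x=U^x$ has been restored.

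Next I would invoke the induction hypothesis on the $(n-1)$ ordered numbers $q_1,\dots,q_{n-1}$, which still satisfy the chain condition $q_i\geq a\,q_{i+1}$, to bound $U^x\geq (1+a)^{x-1}\sum_{i=1}^{n-1}\bigl((1+\frac{1}{a})^{x-1}\bigr)^{\,n-1-i}q_i^x$. Substituting this into the previous display, the prefactor $(1+\frac1a)^{x-1}$ raises every exponent $n-1-i$ to $n-i$, while the leftover term $(1+a)^{x-1}q_n^x$ is exactly the $i=n$ contribution (whose weight is $((1+\frac1a)^{x-1})^{0}=1$). Collecting terms yields precisely $(1+a)^{x-1}\sum_{i=1}^{n}\bigl((1+\frac1a)^{x-1}\bigr)^{\,n-i}q_i^x$, completing the induction.

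I expect the only genuinely delicate point to be the verification that Lemma \ref{lem:1} is applicable, i.e. that $t=U/q_n\geq a$; this is exactly where the hypothesis $p_{(i)}\geq a\,p_{(i+1)}$ is used, and it is also the reason one must peel the smallest term (peeling the largest would produce a ratio below $1$, outside the range of Lemma \ref{lem:1}). The remaining work, namely tracking how the weight $(1+\frac1a)^{x-1}$ propagates through the exponents, is purely bookkeeping and carries no obstruction, so beyond the ordering/chain estimate the proof is a direct recursion on Lemma \ref{lem:1}.
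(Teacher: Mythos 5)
Your proof is correct and takes essentially the same approach as the paper's: both peel off the smallest term, writing $\bigl(\sum_i p_i\bigr)^x = p_{(n)}^x\bigl(1+t\bigr)^x$ with $t=(p_{(1)}+\cdots+p_{(n-1)})/p_{(n)}\geq a$, apply Lemma \ref{lem:1}, and recurse --- the paper simply presents the induction as an iterated chain of inequalities rather than a formal inductive step. Your explicit handling of the base case and of $p_{(n)}=0$ merely fills in details the paper dispatches with ``the other cases can be easily checked by Lemma \ref{lem:1}.''
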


\begin{proof}
For $0\leq x\leq 1$, if $p_{(n)}>0$ using Lemma \ref{lem:1} we have
\begin{equation*}
\begin{aligned}
\left(\sum_{i=1}^n p_i\right)^x&=p_{(n)}^{x}\left(1+\frac{p_{(1)}+...+p_{(n-1)}}{p_{(n)}}\right)^x\\
&\geq (1+{a})^{x-1}p_{(n)}^x+(1+\frac{1}{a})^{x-1}(p_{(1)}+...+p_{(n-1)})^x\\
&\geq...\\
&\geq (1+{a})^{x-1}\sum_{i=1}^n \left((1+\frac{1}{a})^{x-1}\right)^{n-i}p_{(i)}^x.
\end{aligned}
\end{equation*}
The other cases can be easily checked by Lemma \ref{lem:1}. 
\end{proof}

\begin{theorem}\label{thm:1}
For any tripartite mixed state $\rho_{A_1A_2A_3}$, let $\mathcal{Q}$ be a (bipartite) quantum measure satisfying the generalized monogamy relation \eqref{aqq} for $r\geq 2$
and $\mathcal{Q}_{A_1A_3}^{r}\geq a\mathcal{Q}_{A_1A_2}^{r}$ for some $a$, then
we have
\begin{equation}
\mathcal{Q}^{\alpha}_{A_1|A_2A_3}\geq (1+{a})^{\frac{\alpha}{r}-1}\mathcal{Q}_{A_1A_2}^{\alpha}+(1+\frac{1}{a})^{\frac{\alpha}{r}-1}\mathcal{Q}_{A_1A_3}^{\alpha}.
\end{equation}
for $0\leq \alpha\leq r$. 
\end{theorem}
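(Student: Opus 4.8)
The plan is to reduce the statement directly to Lemma \ref{lem:1} via the substitution $x=\alpha/r$. Since $0\leq\alpha\leq r$ forces $0\leq x\leq 1$, this lands exactly in the range where the lemma applies (and, as the hypothesis $\mathcal{Q}_{A_1A_3}^{r}\geq a\mathcal{Q}_{A_1A_2}^{r}$ is meant to be used through Lemma \ref{lem:1}, we take $a\geq 1$). First I would raise the assumed monogamy relation \eqref{aqq} to the power $\alpha/r$; because $y\mapsto y^{\alpha/r}$ is nondecreasing on $[0,\infty)$ when $\alpha/r\in[0,1]$, this gives
$$
\mathcal{Q}^{\alpha}_{A_1|A_2A_3}=\left(\mathcal{Q}^r_{A_1|A_2A_3}\right)^{\alpha/r}\geq\left(\mathcal{Q}^r_{A_1A_2}+\mathcal{Q}^r_{A_1A_3}\right)^{\alpha/r}.
$$

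Next, assuming $\mathcal{Q}_{A_1A_2}\neq 0$, I would factor out $\mathcal{Q}^r_{A_1A_2}$ and set $t=\mathcal{Q}^r_{A_1A_3}/\mathcal{Q}^r_{A_1A_2}$, so that the hypothesis $\mathcal{Q}^r_{A_1A_3}\geq a\mathcal{Q}^r_{A_1A_2}$ becomes precisely $t\geq a\geq 1$. Lemma \ref{lem:1} then yields $(1+t)^{\alpha/r}\geq (1+a)^{\alpha/r-1}+(1+\frac{1}{a})^{\alpha/r-1}t^{\alpha/r}$. Multiplying through by $(\mathcal{Q}^r_{A_1A_2})^{\alpha/r}=\mathcal{Q}^{\alpha}_{A_1A_2}$ restores the two desired terms: the first becomes $(1+a)^{\alpha/r-1}\mathcal{Q}^{\alpha}_{A_1A_2}$, and the second, using the identity $(\mathcal{Q}^r_{A_1A_2})^{\alpha/r}t^{\alpha/r}=\mathcal{Q}^{\alpha}_{A_1A_3}$, becomes $(1+\frac{1}{a})^{\alpha/r-1}\mathcal{Q}^{\alpha}_{A_1A_3}$. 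Chaining this with the displayed inequality above gives the claim.

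The core computation is routine; the point needing care is the degenerate case $\mathcal{Q}_{A_1A_2}=0$, where the factoring step is illegitimate. There I would argue directly: \eqref{aqq} gives $\mathcal{Q}^{\alpha}_{A_1|A_2A_3}\geq\mathcal{Q}^{\alpha}_{A_1A_3}$, and since $a\geq 1$ and $\alpha/r-1\leq 0$ we have $(1+\frac{1}{a})^{\alpha/r-1}\leq 1$; hence the right-hand side of the theorem (whose first term vanishes) is at most $\mathcal{Q}^{\alpha}_{A_1A_3}$, and the inequality survives. The boundary value $\alpha=0$ is also immediate, as both sides equal $1$ owing to $(1+a)^{-1}+(1+\frac{1}{a})^{-1}=1$. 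I do not anticipate any substantive obstacle beyond correctly tracking these edge cases and verifying that the hypothesis translates exactly into the constraint $t\geq a$ demanded by Lemma \ref{lem:1}.
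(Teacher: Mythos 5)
Your proposal is correct and follows essentially the same route as the paper's own proof: raise the assumed relation \eqref{aqq} to the power $\alpha/r$, factor out $\mathcal{Q}^{r}_{A_1A_2}$, and apply Lemma \ref{lem:1} with $t=\mathcal{Q}^{r}_{A_1A_3}/\mathcal{Q}^{r}_{A_1A_2}\geq a$. The only difference is that you explicitly treat the degenerate cases $\mathcal{Q}_{A_1A_2}=0$ and $\alpha=0$, which the paper passes over silently; that is added rigor, not a different method.
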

\begin{proof} It follows from Lemma \ref{lem:1} that
\begin{equation}
\begin{aligned}
 \mathcal{Q}_{A_1|A_2A_3}^{\alpha}&=\left(\mathcal{Q}^{r}_{A_1|A_2A_3}\right)^{\frac{\alpha}{r}}\geq \left(\mathcal{Q}_{A_1A_2}^{r}+\mathcal Q_{A_1A_3}^{r}\right)^{\frac{\alpha}{r}}\\
&=\mathcal{Q}_{A_1A_2}^{\alpha}\left(1+\frac{\mathcal{Q}_{A_1A_3}^{r}}{\mathcal{Q}_{A_1A_2}^{r}}\right)^{\frac{\alpha}{r}}\\
&\geq(1+{a})^{\frac{\alpha}{r}-1}\mathcal{Q}_{A_1A_2}^{\alpha}+(1+\frac{1}{a})^{\frac{\alpha}{r}-1}\mathcal{Q}_{A_1A_3}^{\alpha}.
\end{aligned}
\end{equation}
\end{proof}
\begin{theorem}\label{thm:2}
For any n-partite quantum state $\rho_{A_1A_2...A_n}$, let $\mathcal{Q}$ be a (bipartite) quantum measure satisfying the generalized monogamy relation \eqref{aqq} for $r\geq 2$.
Arrange $\mathcal{Q}_{(1)}\geq \mathcal{Q}_{(2)}\geq...\geq \mathcal{Q}_{(n-1)}$ with $\mathcal{Q}_{(j)}\in\{\mathcal{Q}_{A_1A_i}|i=2,...,n\}, j=1,...,n-1$. If for some $a$, $\mathcal{Q}^r_{(i)}\geq a \mathcal{Q}^r_{(i+1)}$ for $i=1,...,n-2$, then we have
\begin{equation}
\mathcal{Q}^{\alpha}_{A_1|A_2...A_n}\geq(1+{a})^{\frac{\alpha}{r}-1}\sum_{i=1}^{n-1} \left((1+\frac{1}{a})^{\frac{\alpha}{r}-1}\right)^{n-1-i}\mathcal{Q}_{(i)}^{{\alpha}}
\end{equation}
for $0\leq \alpha\leq r$. 
\end{theorem}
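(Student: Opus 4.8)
The plan is to reduce the $n$-partite bound to the already-established $n$-partite generalized monogamy inequality $\mathcal{Q}_{A_1\mid A_2,\ldots,A_n}^{r}\geq\sum_{i=2}^{n}\mathcal{Q}_{A_1A_i}^{r}$ and then feed its right-hand side into Lemma \ref{lem:2}. This mirrors the proof of Theorem \ref{thm:1}, which is exactly the case $n=3$ where Lemma \ref{lem:1} suffices; here the multi-term sum forces us to use Lemma \ref{lem:2} instead.

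First I would write $\mathcal{Q}^{\alpha}_{A_1\mid A_2\ldots A_n}=\bigl(\mathcal{Q}^{r}_{A_1\mid A_2\ldots A_n}\bigr)^{\alpha/r}$ and invoke the generalized monogamy relation to get $\mathcal{Q}^{r}_{A_1\mid A_2\ldots A_n}\geq\sum_{i=2}^{n}\mathcal{Q}^{r}_{A_1A_i}$. Since $0\leq\alpha\leq r$ forces $0\leq\alpha/r\leq1$, the map $t\mapsto t^{\alpha/r}$ is nondecreasing on $[0,\infty)$, so applying it to both sides preserves the inequality and yields $\mathcal{Q}^{\alpha}_{A_1\mid A_2\ldots A_n}\geq\bigl(\sum_{i=2}^{n}\mathcal{Q}^{r}_{A_1A_i}\bigr)^{\alpha/r}$.

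Next I would apply Lemma \ref{lem:2} to the $n-1$ nonnegative numbers $p_i:=\mathcal{Q}^{r}_{A_1A_{i+1}}$, $i=1,\ldots,n-1$, with exponent $x=\alpha/r\in[0,1]$. After sorting these into $\mathcal{Q}^{r}_{(1)}\geq\cdots\geq\mathcal{Q}^{r}_{(n-1)}$, the hypothesis $\mathcal{Q}^{r}_{(i)}\geq a\,\mathcal{Q}^{r}_{(i+1)}$ is precisely the gap condition the lemma requires. Lemma \ref{lem:2} (with $n-1$ in place of $n$) then gives $\bigl(\sum_{i=2}^{n}\mathcal{Q}^{r}_{A_1A_i}\bigr)^{\alpha/r}\geq(1+a)^{\alpha/r-1}\sum_{i=1}^{n-1}\bigl((1+\frac{1}{a})^{\alpha/r-1}\bigr)^{n-1-i}\bigl(\mathcal{Q}^{r}_{(i)}\bigr)^{\alpha/r}$. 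Using $\bigl(\mathcal{Q}^{r}_{(i)}\bigr)^{\alpha/r}=\mathcal{Q}^{\alpha}_{(i)}$ and chaining with the previous step closes the argument.

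The computation is routine once the ingredients are lined up; the only point demanding care is the index bookkeeping when transferring Lemma \ref{lem:2}, stated for $n$ terms, to the $n-1$ reduced correlations $\mathcal{Q}_{A_1A_i}$. This is why the geometric weight carries the exponent $n-1-i$ rather than $n-i$, and why the gap hypothesis runs only over $i=1,\ldots,n-2$. I would also note that the ordering together with $\mathcal{Q}^{r}_{(i)}\geq a\,\mathcal{Q}^{r}_{(i+1)}$ tacitly requires $a\geq1$, consistent with the standing assumption in Lemmas \ref{lem:1} and \ref{lem:2}.
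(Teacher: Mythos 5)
Your proposal is correct and follows essentially the same route as the paper: both pass from $\mathcal{Q}^{\alpha}_{A_1|A_2\ldots A_n}=\bigl(\mathcal{Q}^{r}_{A_1|A_2\ldots A_n}\bigr)^{\alpha/r}$ to the $n$-partite relation $\mathcal{Q}^{r}_{A_1|A_2\ldots A_n}\geq\sum_{i=2}^{n}\mathcal{Q}^{r}_{A_1A_i}$ and then apply Lemma \ref{lem:2} with $x=\alpha/r$ to the $n-1$ ordered quantities $\mathcal{Q}^{r}_{(i)}$. Your explicit remarks on the monotonicity of $t\mapsto t^{\alpha/r}$ and the index shift (using the lemma for $n-1$ terms, hence the exponent $n-1-i$) are details the paper leaves implicit, but the argument is the same.
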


\begin{proof}
By Lemma \ref{lem:2}, we have
\begin{equation*}
\begin{aligned}
\mathcal{Q}^{\alpha}_{A_1|A_2...A_n}&=\left(\mathcal{Q}^{r}_{A_1|A_2...A_n}\right)^{\frac{\alpha}{r}}
\geq \left(\mathcal{Q}_{A_1A_2}^{r}+...+\mathcal{Q}_{A_1A_n}^{r}\right)^{\frac{\alpha}{r}}\\
&\geq(1+{a})^{\frac{\alpha}{r}-1}\sum_{i=1}^{n-1} \left((1+\frac{1}{a})^{\frac{\alpha}{r}-1}\right)^{n-1-i}\mathcal{Q}_{(i)}^{{\alpha}},
\end{aligned}
\end{equation*}
\end{proof}

The general monogamy relations work for any quantum correlation measure such as concurrence, negativity, entanglement of formation etc.
Thus our theorems produce
tighter weighted monogamy relations than the existing ones (cf. \cite{JFQ,zhu,ZJZ1,ZJZ2}). Moreover, the new weighted monogamy relations also can be used for
the Tsallis-$q$ entanglement and R\'enyi-$q$ entanglement measures, and they also
 outperform some of the recently found monogamy relations in \cite{jll,jin3,slh}. 

In the following, we use the concurrence as an example to show the advantage of our monogamy relations.

For a bipartite pure state $\rho=|\psi\rangle_{AB}\in{H}_A\otimes {H}_B$, the concurrence is defined \cite{AU,PR,SA} by $C(|\psi\rangle_{AB})=\sqrt{{2\left[1-\mathrm{Tr}(\rho_A^2)\right]}}$,
where $\rho_A$ is the reduced density matrix. For a mixed state $\rho_{AB}$ the concurrence is given by the convex roof extension
$C(\rho_{AB})=\min_{\{p_i,|\psi_i\rangle\}}\sum_ip_iC(|\psi_i\rangle)$,
where the minimum is taken over all possible decompositions of $\rho_{AB}=\sum\limits_{i}p_i|\psi_i\rangle\langle\psi_i|$, with $p_i\geq0$, $\sum\limits_{i}p_i=1$ and $|\psi_i\rangle\in {H}_A\otimes {H}_B$.

For convenience, we write $C_{A_1A_i}=C(\rho_{A_1A_i})$ and $C_{A_1|A_2,\cdots,A_{n}}=C(\rho_{A_1|A_2\cdots A_{n}})$. The following conclusions are easily seen by
the similar method as in the proof of Theorem 1.

\begin{corollary}  Let $C$ be the concurrence satisfying the generalized monogamy relation \eqref{aqq} for $r\geq 2$.
For any 3-qubit mixed state $\rho_{A_1A_2A_3}$, if $C_{A_1A_3}^{r}\geq aC_{A_1A_2}^{r}$ for some $a$, then
we have
\begin{equation}
C^{\alpha}_{A_1|A_2A_3}\geq (1+{a})^{\frac{\alpha}{r}-1}C_{A_1A_2}^{\alpha}+(1+\frac{1}{a})^{\frac{\alpha}{r}-1}C_{A_1A_3}^{\alpha}.
\end{equation}
for $0\leq \alpha\leq r$. 
\end{corollary}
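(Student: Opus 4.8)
The plan is to recognize this corollary as the specialization of Theorem~\ref{thm:1} to the case where the abstract bipartite measure $\mathcal{Q}$ is taken to be the concurrence $C$. Since the concurrence is by definition a bipartite entanglement measure, the only hypotheses of Theorem~\ref{thm:1} that must hold are exactly the two assumptions stated in the corollary: that $C$ obeys the generalized monogamy relation \eqref{aqq} for $r\geq 2$, and that the reduced concurrences are ordered via $C_{A_1A_3}^{r}\geq a\,C_{A_1A_2}^{r}$. Both are granted, so in principle one could simply invoke Theorem~\ref{thm:1}. To keep the argument self-contained, I would instead reproduce the short chain of inequalities from the proof of Theorem~\ref{thm:1} verbatim with $C$ in place of $\mathcal{Q}$.

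Concretely, I would first write $C_{A_1|A_2A_3}^{\alpha}=\bigl(C_{A_1|A_2A_3}^{r}\bigr)^{\alpha/r}$ and apply \eqref{aqq} for the concurrence to obtain the lower bound $\bigl(C_{A_1A_2}^{r}+C_{A_1A_3}^{r}\bigr)^{\alpha/r}$; this step uses $\alpha/r\geq 0$, so that the map $u\mapsto u^{\alpha/r}$ on $[0,\infty)$ is nondecreasing. Next I would factor out $C_{A_1A_2}^{r}$ to rewrite the bound as $C_{A_1A_2}^{\alpha}\bigl(1+C_{A_1A_3}^{r}/C_{A_1A_2}^{r}\bigr)^{\alpha/r}$, and set $t=C_{A_1A_3}^{r}/C_{A_1A_2}^{r}$ together with $x=\alpha/r$. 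The ordering hypothesis gives $t\geq a\geq 1$, and $0\leq\alpha\leq r$ gives $0\leq x\leq 1$, which are precisely the conditions under which Lemma~\ref{lem:1} applies. Invoking Lemma~\ref{lem:1} yields $(1+t)^{x}\geq (1+a)^{x-1}+(1+\frac{1}{a})^{x-1}t^{x}$, and substituting back produces the claimed inequality after clearing the factor $C_{A_1A_2}^{\alpha}$.

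The only point requiring care, and the main (though routine) obstacle, is the degenerate case $C_{A_1A_2}=0$, where the factoring step is not literally valid. There the ordering hypothesis $C_{A_1A_3}^{r}\geq a\cdot 0$ holds automatically, and I would argue directly: \eqref{aqq} gives $C_{A_1|A_2A_3}^{r}\geq C_{A_1A_3}^{r}$, hence $C_{A_1|A_2A_3}^{\alpha}\geq C_{A_1A_3}^{\alpha}$ since $\alpha\geq 0$, while $0\leq\alpha\leq r$ together with $a\geq 1$ forces $(1+\frac{1}{a})^{\alpha/r-1}\leq 1$. Thus the right-hand side collapses to at most $C_{A_1A_3}^{\alpha}$ and the inequality persists. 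Everything beyond this special case is a mechanical transcription of the Theorem~\ref{thm:1} argument to the concurrence, which is exactly what is meant by the remark that the conclusion is ``easily seen by the similar method.''
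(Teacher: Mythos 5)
Your proposal is correct and follows exactly the paper's intended route: the paper proves this corollary by noting it is ``easily seen by the similar method as in the proof of Theorem 1,'' i.e.\ by specializing the Theorem~\ref{thm:1} argument (monogamy relation \eqref{aqq}, factoring, then Lemma~\ref{lem:1} with $t=C_{A_1A_3}^{r}/C_{A_1A_2}^{r}$ and $x=\alpha/r$) to the concurrence, which is precisely what you do. Your explicit treatment of the degenerate case $C_{A_1A_2}=0$ is a small bonus the paper omits, but it does not change the approach.
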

\begin{corollary}  Let $C$ be the concurrence satisfying the generalized monogamy relation \eqref{aqq} for $r\geq 2$.
For any n-qubit quantum state $\rho_{A_1A_2...A_n}$, arrange
$C_{(1)}\geq C_{(2)}\geq...\geq C_{(n-1)}$ with $C_{(j)}\in\{C_{A_1A_i}|i=2,...,n\}, j=1,...,n-1$. If for some $a$, $C^r_{(i)}\geq a C^r_{(i+1)}$ for $i=1,...,n-2$, then we have
\begin{equation}
C^{\alpha}_{A_1|A_2...A_n}\geq(1+{a})^{\frac{\alpha}{r}-1}\sum_{i=1}^{n-1} \left((1+\frac{1}{a})^{\frac{\alpha}{r}-1}\right)^{n-1-i}C_{(i)}^{{\alpha}}
\end{equation}
for $0\leq \alpha\leq r$. 
\end{corollary}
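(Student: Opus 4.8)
The plan is to recognize that this corollary is nothing but the specialization of Theorem~\ref{thm:2} to the case where the generic bipartite measure $\mathcal{Q}$ is taken to be the concurrence $C$. Since $C$ is assumed to obey the generalized monogamy relation \eqref{aqq} for $r\geq 2$, all hypotheses of Theorem~\ref{thm:2} are already in force once we identify $C_{(j)}$ with $\mathcal{Q}_{(j)}$ and read the ordering-and-ratio condition $C^r_{(i)}\geq a C^r_{(i+1)}$ as the condition $\mathcal{Q}^r_{(i)}\geq a\mathcal{Q}^r_{(i+1)}$. The cleanest route is therefore to invoke Theorem~\ref{thm:2} directly; but since the surrounding text prefers to re-derive these facts by the same method, I would reproduce the short chain of inequalities explicitly.

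First I would write $C^{\alpha}_{A_1|A_2\cdots A_n}=\bigl(C^{r}_{A_1|A_2\cdots A_n}\bigr)^{\alpha/r}$ and combine the generalized monogamy relation \eqref{aqq} with the monotonicity of $x\mapsto x^{\alpha/r}$ on $[0,\infty)$, which is valid precisely because $0\leq \alpha/r\leq 1$ whenever $0\leq\alpha\leq r$. This gives
\[
C^{\alpha}_{A_1|A_2\cdots A_n}\geq \Bigl(\sum_{i=2}^{n} C^{r}_{A_1A_i}\Bigr)^{\alpha/r}.
\]
Next I would apply Lemma~\ref{lem:2} to the $n-1$ nonnegative numbers $C^{r}_{A_1A_2},\ldots,C^{r}_{A_1A_n}$, with exponent $x=\alpha/r\in[0,1]$ and with the prescribed descending arrangement $C^r_{(1)}\geq\cdots\geq C^r_{(n-1)}$ satisfying $C^r_{(i)}\geq a\,C^r_{(i+1)}$. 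Recalling that $\bigl(C^{r}_{(i)}\bigr)^{\alpha/r}=C_{(i)}^{\alpha}$, Lemma~\ref{lem:2} then yields exactly
\[
\Bigl(\sum_{i=2}^{n} C^{r}_{A_1A_i}\Bigr)^{\alpha/r}\geq (1+a)^{\frac{\alpha}{r}-1}\sum_{i=1}^{n-1}\Bigl((1+\frac{1}{a})^{\frac{\alpha}{r}-1}\Bigr)^{n-1-i}C_{(i)}^{\alpha},
\]
which is the claimed inequality.

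The argument has essentially no obstacle, since every step is a direct substitution into results established earlier in the excerpt. The only points demanding a line of care are (i) verifying that $0\leq\alpha/r\leq 1$, so that both the monotonicity of the power map and the exponent hypothesis of Lemma~\ref{lem:2} apply—this is immediate from $0\leq\alpha\leq r$—and (ii) confirming that the ratio hypothesis of Lemma~\ref{lem:2} coincides with the assumption $C^r_{(i)}\geq a C^r_{(i+1)}$ already imposed in the statement, so that no extra ordering must be arranged by hand. The degenerate situation in which some $C_{(i)}=0$ is subsumed under the remaining cases noted at the end of the proof of Lemma~\ref{lem:2}.
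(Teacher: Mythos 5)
Your proposal is correct and is essentially the paper's own argument: the paper disposes of this corollary by remarking that it follows ``by the similar method as in the proof of Theorem 1,'' i.e.\ by specializing the proof of Theorem~\ref{thm:2} (the $n$-partite form of \eqref{aqq} followed by Lemma~\ref{lem:2} with $x=\alpha/r$) to the concurrence, which is exactly the chain of inequalities you reproduce.
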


\begin{example}
 Consider the following three-qubit state $|\psi\rangle$ with generalized Schmidt decomposition \cite{AA,XH},
$$
|\psi\rangle=\lambda_{0}|000\rangle+\lambda_{1} e^{i \varphi}|100\rangle+\lambda_{2}|101\rangle+\lambda_{3}|110\rangle+\lambda_{4}|111\rangle,
$$
where $\lambda_{i} \geq 0$ and $\sum_{i=0}^{4} \lambda_{i}^{2}=1$. Then $C_{A_1 \mid A_2 A_3}=2 \lambda_{0} \sqrt{\lambda_{2}^{2}+\lambda_{3}^{2}+\lambda_{4}^{2}}, C_{A_1 A_2}=$ $2 \lambda_{0} \lambda_{2}$, and $C_{A_1 A_3}=2 \lambda_{0} \lambda_{3}$.

Set $\lambda_{0}=\lambda_3=\frac{1}{2}, \lambda_{1}=\lambda_{2}=\lambda_{4}=\frac{\sqrt{6}}{6}$.
We have
$C_{A_1 \mid A_2A_3}=\frac{\sqrt{21}}{6}, C_{A_1 A_2}=\frac{\sqrt{6}}{6}, C_{A_1A_3}=\frac{1}{2}$. Set $a=\sqrt{6}/2$, then
the lower bound of $C_{A_1 \mid A_2A_3}^{\alpha}$ given in \cite{JFQ} is
 \begin{equation*}
C_{A_1A_2}^{\alpha}+\frac{(1+a)^{\frac{\alpha}{r}}-1}{a^{\frac{\alpha}{r}}} C_{A_1 A_3}^{\alpha}=\left(\frac{\sqrt{6}}{6}\right)^{\alpha}+\frac{(1+\frac{\sqrt{6}}{2})^{\frac{\alpha}{r}}-1}{(\frac{\sqrt{6}}{2})^{\frac{\alpha}{r}}}\left(\frac{1}{2}\right)^{\alpha}=Z_1,
\end{equation*}
the lower bound of $C_{A_1 \mid A_2A_3}^{\alpha}$ given in \cite{ZJZ1,ZJZ2} is
 \begin{equation*}
C_{A_1A_2}^{\alpha}p^{\frac{\alpha}{r}}+\frac{(1+a)^{\frac{\alpha}{r}}-p^{\frac{\alpha}{r}}}{a^{\frac{\alpha}{r}}} C_{A_1 A_3}^{\alpha}=\left(\frac{\sqrt{6}}{6}\right)^{\alpha}(\frac{1}{2})^{\frac{\alpha}{r}}+\frac{(1+\frac{\sqrt{6}}{2})^{\frac{\alpha}{r}}-(\frac{1}{2})^{\frac{\alpha}{r}}}{(\frac{\sqrt{6}}{2})^{\frac{\alpha}{r}}}\left(\frac{1}{2}\right)^{\alpha}=Z_2,
\end{equation*}
with $p=\frac{1}{2}$ and $\alpha\leq \frac{r}{2}$.
While our bound is
\begin{equation*}
(1+{a})^{\frac{\alpha}{r}-1}C_{A_1A_2}^{\alpha}+(1+\frac{1}{a})^{\frac{\alpha}{r}-1}C_{A_1A_3}^{\alpha}=(1+\frac{\sqrt{6}}{2})^{\frac{\alpha}{r}-1}\left(\frac{\sqrt{6}}{6}\right)^{\alpha}+(1+\frac{2}{\sqrt{6}})^{\frac{\alpha}{r}-1}\left(\frac{1}{2}\right)^{\alpha}=Z_3
\end{equation*}
Fig. 1 charts the graphs of the three bounds and the figure clearly shows that our result is better than those in \cite{JFQ, ZJZ1, ZJZ2} for $0 \leq \alpha \leq 1$ and $r \geq 2$.

\begin{figure}[!htb]
\centerline{\includegraphics[width=0.6\textwidth]{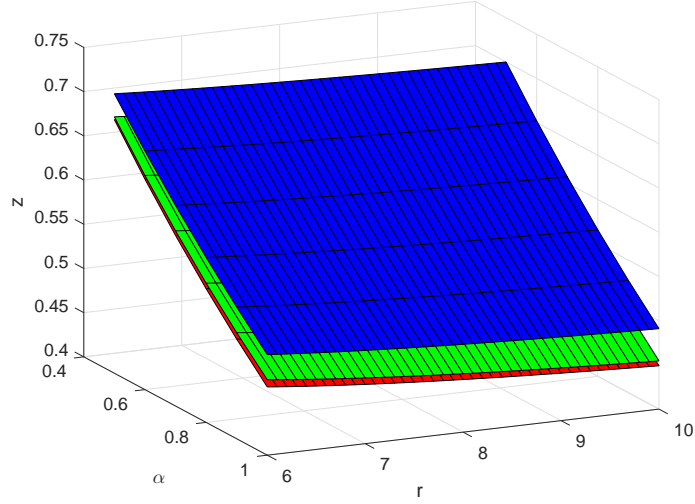}}
\renewcommand{\figurename}{Fig.}
\caption{The $z$-axis shows the concurrence as a function of $\alpha, r$. The blue, green and red surfaces represent our lower bound, the lower bound from \cite{ZJZ1,ZJZ2} and the lower bound from \cite{JFQ} respectively.}
\end{figure}

\end{example}

\section{Polygamy relations for general quantum correlations}
In  \cite{jinzx}, the authors proved that for arbitrary dimensional tripartite states, there exists $0 \leq s \leq 1$ such that any quantum correlation measure $\mathcal{Q}$ satisfies the following polygamy relation:
\begin{equation}\label{poly}
\mathcal{Q}_{A \mid B C}^{s} \leq \mathcal{Q}_{A B}^{s}+\mathcal{Q}_{A C}^{s}.
\end{equation}
Using the similar method as Lemma \ref{lem:1} and Lemma \ref{lem:2}, we can prove the following Lemma \ref{lem:4} and Lemma \ref{lem:5}.
\begin{lemma}\label{lem:4}
For  $t\geq a\geq 1$, $x\geq 1$ we have,
\begin{equation}
(1+t)^x\leq (1+{a})^{x-1}+(1+\frac{1}{a})^{x-1}t^x.
\end{equation}

\end{lemma}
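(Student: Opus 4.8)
The plan is to mirror the argument used for Lemma \ref{lem:1}, exploiting the fact that passing from \(0<x\leq 1\) to \(x\geq 1\) simply reverses the sign of the exponent \(x-1\) and hence flips every monotonicity conclusion. Concretely, I would reuse the same auxiliary function \(g(x,y)=(1+y)^x-(1+a)^{x-1}y^x\) on the interval \(0<y\leq \frac{1}{a}\), now taken for \(x\geq 1\) rather than \(0<x\leq 1\).

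First I would differentiate in \(y\), obtaining the identical formula \(\frac{\partial g}{\partial y}=xy^{x-1}\big((1+\frac{1}{y})^{x-1}-(1+a)^{x-1}\big)\). The decisive difference is that for \(x\geq 1\) the exponent \(x-1\) is now nonnegative, so \(h(x,y)=(1+\frac{1}{y})^{x-1}\) becomes a \emph{decreasing} function of \(y\) on \((0,\frac{1}{a}]\). Hence \(h(x,y)\geq h(x,\frac{1}{a})=(1+a)^{x-1}\), which forces \(\frac{\partial g}{\partial y}\geq 0\), i.e.\ \(g\) is increasing in \(y\). This is the exact reversal of the \(0<x\leq 1\) case, where \(h\) was increasing and \(g\) decreasing.

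Then I would evaluate at the right endpoint: since \(g\) is increasing, \(g(x,y)\leq g(x,\frac{1}{a})=(1+\frac{1}{a})^{x-1}\), the same closed form computed in Lemma \ref{lem:1}. Finally, substituting \(y=\frac{1}{t}\), which is admissible precisely because \(t\geq a\) gives \(0<\frac{1}{t}\leq\frac{1}{a}\), yields \(\frac{(1+t)^x}{t^x}-\frac{(1+a)^{x-1}}{t^x}\leq(1+\frac{1}{a})^{x-1}\); multiplying through by \(t^x>0\) and rearranging produces the claimed upper bound \((1+t)^x\leq (1+a)^{x-1}+(1+\frac{1}{a})^{x-1}t^x\).

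I do not expect a genuine obstacle here, since the whole argument is the monotone mirror image of Lemma \ref{lem:1}; the only point that must be checked with care is the reversed direction of monotonicity of \(h\), and consequently of \(g\), which is driven solely by the sign change of \(x-1\). One should also confirm the boundary value \(g(x,\frac{1}{a})=(1+\frac{1}{a})^{x-1}\) via the factorization \((1+a)^x-(1+a)^{x-1}=a\,(1+a)^{x-1}\), so that the endpoint term matches the stated right-hand side exactly.
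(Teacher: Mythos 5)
Your proposal is correct and is exactly what the paper intends: the paper proves Lemma \ref{lem:4} only by the remark ``using the similar method as Lemma \ref{lem:1}'', and your mirrored argument (sign of $x-1$ flips, so $h$ becomes decreasing, $g$ becomes increasing, and the endpoint value $g(x,\frac{1}{a})=(1+\frac{1}{a})^{x-1}$ now serves as an upper bound) is precisely that method, with the substitution $y=\frac{1}{t}$ carried out correctly.
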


\begin{lemma}\label{lem:5}
For nonnegative numbers $p_i$, $i=1,\cdots, n$, rearrange them in descending order: $p_{(1)}\geq p_{(2)}\geq ...\geq p_{(n)}$
where $p_{(i)}\in \{p_j|j=1,\cdots, n\}$. If $p_{(i)}\geq a p_{(i+1)}$ for $i=1,...,n-1$ and $a$, then
we have
\begin{equation}
\left(\sum_{i=1}^n p_i\right)^x\leq (1+{a})^{x-1}\sum_{i=1}^n \left((1+\frac{1}{a})^{x-1}\right)^{n-i}p_{(i)}^x,
\end{equation}
for $x\geq 1$.
\end{lemma}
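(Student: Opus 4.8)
The plan is to mirror verbatim the proof of Lemma \ref{lem:2}, but with every inequality reversed: wherever that argument invoked Lemma \ref{lem:1} (the $0\le x\le 1$ lower bound for $(1+t)^x$) I would instead invoke Lemma \ref{lem:4} (the $x\ge 1$ upper bound), whose conclusion $(1+t)^x\le(1+a)^{x-1}+(1+\frac{1}{a})^{x-1}t^x$ is the exact reversal. I would proceed by induction on $n$, peeling off the smallest term at each stage.

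For the inductive step, assume first that $p_{(n)}>0$ and factor out the smallest element:
$$\left(\sum_{i=1}^n p_i\right)^x=p_{(n)}^{x}\left(1+\frac{p_{(1)}+\cdots+p_{(n-1)}}{p_{(n)}}\right)^x.$$
Writing $t=\frac{p_{(1)}+\cdots+p_{(n-1)}}{p_{(n)}}$, the crucial point is to verify the hypothesis $t\ge a$ of Lemma \ref{lem:4}. This follows from the ratio condition alone: since $p_{(n-1)}\ge a\,p_{(n)}$, we have $p_{(1)}+\cdots+p_{(n-1)}\ge p_{(n-1)}\ge a\,p_{(n)}$, hence $t\ge a\ge 1$. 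Lemma \ref{lem:4} then gives
$$\left(\sum_{i=1}^n p_i\right)^x\le (1+a)^{x-1}p_{(n)}^x+\left(1+\frac{1}{a}\right)^{x-1}\left(p_{(1)}+\cdots+p_{(n-1)}\right)^x.$$

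Next I would apply the inductive hypothesis to the inner sum $p_{(1)}+\cdots+p_{(n-1)}$: its $n-1$ summands are still in descending order and still satisfy $p_{(i)}\ge a\,p_{(i+1)}$ for $i=1,\dots,n-2$, so the bound for $n-1$ terms applies and produces $(1+a)^{x-1}\sum_{i=1}^{n-1}\left((1+\frac{1}{a})^{x-1}\right)^{(n-1)-i}p_{(i)}^x$. Substituting this into the previous display and absorbing the extra factor of $(1+\frac{1}{a})^{x-1}$ into the exponents reproduces the claimed inequality, with the base case $n=1$ being trivial; the degenerate cases where some $p_{(i)}=0$ are checked directly exactly as in the proof of Lemma \ref{lem:2}.

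The only genuinely substantive point, and the one I would flag as the main obstacle, is verifying $t\ge a$ at every iteration, since this is precisely where the chain condition $p_{(i)}\ge a\,p_{(i+1)}$ is used and is what licenses each application of Lemma \ref{lem:4}; everything else is routine bookkeeping of exponents identical to the monogamy case.
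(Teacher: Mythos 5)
Your proof is correct and is essentially the paper's own argument: the paper proves this lemma only by remarking that it follows ``by the similar method as Lemma \ref{lem:1} and Lemma \ref{lem:2},'' i.e.\ exactly the peel-off-the-smallest-term iteration you carry out, with Lemma \ref{lem:4} replacing Lemma \ref{lem:1}. Your explicit verification of $t\geq a$ (via $p_{(1)}+\cdots+p_{(n-1)}\geq p_{(n-1)}\geq a\,p_{(n)}$) and the trivial base case (using $(1+a)^{x-1}\geq 1$ for $x\geq 1$) correctly fill in the details the paper leaves implicit.
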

\begin{remark}\label{rem:3}~~
Similar argument as Remark \ref{rem:1} implies that
for the case $x\geq 1$, $t\geq a\geq 1$ and $f(x)\leq (1+{a})^{x-1} $, we have
\begin{equation*}
(1+t)^x\leq f(x)+\frac{(1+a)^x-f(x)}{a^x}t^x.
\end{equation*}

We can easily check that for $ x\geq 1$, $t\geq a\geq 1$ and $f(x)\leq (1+{a})^{x-1}$,
$$(1+{a})^{x-1}+(1+\frac{1}{a})^{x-1}t^x-
\left[f(x)+\frac{(1+a)^x-f(x)}{a^x}t^x\right]\leq 0.$$

\end{remark}
\begin{remark}\label{rem:4}
In Lemma 2 of \cite{JFQ}, the authors gave an upper bound of $(1+t)^x$ for $x\geq 1$ and $t\geq a\geq 1$
\begin{equation*}
(1+t)^{x} \leq 1+\frac{(1+a)^{x}-1}{a^{x}} t^{x}.
\end{equation*}
Actually this is a special case of $f(x)=1\leq (1+{a})^{x-1}$ for $ x\geq 1$ in Remark \ref{rem:3},
therefore our upper bound of $(1+t)^x$ is better than the one given in \cite{JFQ}.
Consequently our polygamy relations based on Lemma \ref{lem:4} are better than those given in \cite{JFQ} based on Lemma 2 of \cite{JFQ}.

\end{remark}
\begin{remark}\label{rem:Jing's paper1}
In \cite[Lemma 2]{ZJZ1}, the authors also gave an upper bound of $(1+t)^x$ for $x\geq 1$, $t\geq a\geq 1$ and $0<q\leq 1$
\begin{equation*}
(1+t)^{x} \leq q^{x}+\frac{(1+a)^{x}-q^{x}}{a^{x}} t^{x},
\end{equation*}
Actually this is the special cases of $f(x)=q^x\leq (1+{a})^{x-1}$ for $ x\geq 1$ in Remark \ref{rem:3},
therefore our upper bound of $(1+t)^x$ is better than the one given in \cite{ZJZ1}.
Thus, our polygamy relations based on Lemma \ref{lem:4} are better than those given in \cite{ZJZ1} based on Lemma 2 of \cite{ZJZ1}.
\end{remark}
\begin{remark}\label{rem:Jing's paper2}
In \cite[Lemma 1]{ZJZ2}, an upper bound of $(1+t)^x$ for $x\geq 1$, $t\geq a\geq 1$ was given:
\begin{equation*}
(1+t)^{x} \leq (\frac{1}{2})^{x}+\frac{(1+a)^{x}-(\frac{1}{2})^{x}}{a^{x}} t^{x},
\end{equation*}
Again this is a special case of $f(x)=(\frac{1}{2})^{x}\leq (1+{a})^{x-1}$ for $ x\geq 1$ in Remark \ref{rem:3},
therefore our upper bound of $(1+t)^x$ is better than the one given in \cite{ZJZ2}. Naturally
our polygamy relations based on Lemma \ref{lem:4} are stronger than those of \cite{ZJZ2} based on Lemma 1 of \cite{ZJZ2}.
\end{remark}
\begin{theorem} Let $\mathcal{Q}$ be a bipartite measure satisfying the generalized polygamy relation \eqref{poly} for $0 \leq s \leq 1$.
Suppose $\mathcal{Q}_{A_1A_3}^{s} \geq a \mathcal{Q}_{A_1 A_2}^{s}$
for $a\geq 1$ on any tripartite state $\rho_{A B C} \in$ $H_{A_1} \otimes H_{A_2} \otimes H_{A_3}$, then the quantum correlation measure $\mathcal{Q}$ satisfies
$$
\mathcal{Q}_{A_1 \mid A_2 A_3}^{\beta} \leq (1+{a})^{\frac{\beta}{s}-1}\mathcal{Q}_{A_1 A_2}^{\beta}+(1+\frac{1}{a})^{\frac{\beta}{s}-1} \mathcal{Q}_{A_1 A_3}^{\beta}
$$
for $\beta \geq s$.
\end{theorem}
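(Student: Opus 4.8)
The plan is to mirror the proof of Theorem \ref{thm:1} almost verbatim, replacing the monogamy input \eqref{aqq} by the polygamy relation \eqref{poly} and the lower-bound Lemma \ref{lem:1} by its upper-bound counterpart Lemma \ref{lem:4}. The guiding observation is that raising $\mathcal{Q}_{A_1\mid A_2A_3}^{s}$ to the power $\frac{\beta}{s}$ is order-preserving precisely because $\frac{\beta}{s}\geq 1$, so \eqref{poly} can be boosted to the $\beta$th power with the inequality staying in the $\leq$ direction we want.

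First I would write $\mathcal{Q}_{A_1\mid A_2A_3}^{\beta}=\bigl(\mathcal{Q}_{A_1\mid A_2A_3}^{s}\bigr)^{\frac{\beta}{s}}$ and apply the polygamy relation \eqref{poly} together with monotonicity of $u\mapsto u^{\frac{\beta}{s}}$ on $[0,\infty)$ to get $\mathcal{Q}_{A_1\mid A_2A_3}^{\beta}\leq\bigl(\mathcal{Q}_{A_1A_2}^{s}+\mathcal{Q}_{A_1A_3}^{s}\bigr)^{\frac{\beta}{s}}$. Next I would factor out $\mathcal{Q}_{A_1A_2}^{s}$ and use $\bigl(\mathcal{Q}_{A_1A_2}^{s}\bigr)^{\frac{\beta}{s}}=\mathcal{Q}_{A_1A_2}^{\beta}$ to rewrite the right-hand side as $\mathcal{Q}_{A_1A_2}^{\beta}\bigl(1+t\bigr)^{\frac{\beta}{s}}$ with $t=\mathcal{Q}_{A_1A_3}^{s}/\mathcal{Q}_{A_1A_2}^{s}$. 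The hypothesis $\mathcal{Q}_{A_1A_3}^{s}\geq a\,\mathcal{Q}_{A_1A_2}^{s}$ with $a\geq 1$ guarantees $t\geq a\geq 1$, and $x=\frac{\beta}{s}\geq 1$ since $\beta\geq s$, so both hypotheses of Lemma \ref{lem:4} are met. Applying it gives $(1+t)^{\frac{\beta}{s}}\leq(1+a)^{\frac{\beta}{s}-1}+(1+\frac{1}{a})^{\frac{\beta}{s}-1}t^{\frac{\beta}{s}}$; multiplying back through by $\mathcal{Q}_{A_1A_2}^{\beta}$ and observing that $\mathcal{Q}_{A_1A_2}^{\beta}\,t^{\frac{\beta}{s}}=\mathcal{Q}_{A_1A_3}^{\beta}$ yields exactly the claimed bound.

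The only real obstacle is the degenerate case $\mathcal{Q}_{A_1A_2}=0$, where the ratio $t$ is undefined and the factoring step breaks down. I would dispatch it separately: when $\mathcal{Q}_{A_1A_2}=0$ relation \eqref{poly} already gives $\mathcal{Q}_{A_1\mid A_2A_3}^{s}\leq\mathcal{Q}_{A_1A_3}^{s}$, hence $\mathcal{Q}_{A_1\mid A_2A_3}^{\beta}\leq\mathcal{Q}_{A_1A_3}^{\beta}$; and since $1+\frac{1}{a}\geq 1$ with exponent $\frac{\beta}{s}-1\geq 0$, we have $(1+\frac{1}{a})^{\frac{\beta}{s}-1}\geq 1$, so $\mathcal{Q}_{A_1A_3}^{\beta}\leq(1+\frac{1}{a})^{\frac{\beta}{s}-1}\mathcal{Q}_{A_1A_3}^{\beta}$, and the (nonnegative) first term on the right of the target inequality only helps. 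Beyond this edge case the argument is a routine transcription of Theorem \ref{thm:1}, the genuine content having already been absorbed into Lemma \ref{lem:4}.
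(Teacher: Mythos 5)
Your proof is correct and follows exactly the route the paper intends (and uses for its $n$-partite analogue): raise the polygamy relation \eqref{poly} to the power $\frac{\beta}{s}\geq 1$, factor out $\mathcal{Q}_{A_1A_2}^{s}$, and apply Lemma \ref{lem:4} with $t=\mathcal{Q}_{A_1A_3}^{s}/\mathcal{Q}_{A_1A_2}^{s}\geq a$ and $x=\frac{\beta}{s}$. Your separate treatment of the degenerate case $\mathcal{Q}_{A_1A_2}=0$ is a small but welcome addition that the paper passes over in silence.
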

\begin{theorem} Let $\rho$ be a state on the multipartite system $A_1A_2...A_n$. Let $\mathcal{Q}$ be a bipartite measure satisfying the generalized polygamy relation \eqref{poly} for $0 \leq s \leq 1$. Set $\mathcal{Q}_{(1)}\geq \mathcal{Q}_{(2)}\geq...\geq \mathcal{Q}_{(n-1)}$ with $\mathcal{Q}_{(j)}\in\{\mathcal{Q}_{A_1A_i}|i=2,...,n\}, j=1,...,n-1$. If
$\mathcal{Q}_{(i)}^{s}\geq a \mathcal{Q}_{(i+1)}^s$ for $a$ and $i=1,...,n-2$, then we have
\begin{equation}
\mathcal{Q}^{\beta}_{A_1|A_2...A_n}\leq(1+{a})^{\frac{\beta}{s}-1}\sum_{i=1}^{n-1} \left((1+\frac{1}{a})^{\frac{\beta}{s}-1}\right)^{n-1-i}\mathcal{Q}_{(i)}^{{\beta}}
\end{equation}
for $\beta\geq s$. 
\end{theorem}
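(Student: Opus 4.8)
The plan is to mirror the proof of Theorem \ref{thm:2}, replacing its monogamy ingredients by their polygamy counterparts and reversing every inequality. The two tools are the iterated form of the polygamy relation \eqref{poly} and Lemma \ref{lem:5}, applied with the exponent $x=\beta/s$.

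First I would pass from the tripartite relation \eqref{poly} to the $n$-partite case by iteration: grouping $A_3\cdots A_n$ into a single subsystem and applying \eqref{poly} repeatedly yields
$$\mathcal{Q}^{s}_{A_1|A_2\cdots A_n}\leq \sum_{i=2}^{n}\mathcal{Q}^{s}_{A_1A_i}.$$
This is exactly the analogue of the inequality $\mathcal{Q}^{r}_{A_1|A_2\cdots A_n}\geq\sum_{i=2}^n\mathcal{Q}^r_{A_1A_i}$ used silently in Theorem \ref{thm:2}, with the direction flipped.

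Next I set $x=\beta/s$, noting $x\geq 1$ since $\beta\geq s$, and put $p_{i}=\mathcal{Q}^{s}_{A_1A_i}$ for $i=2,\ldots,n$, a list of $n-1$ nonnegative numbers. The ordering $\mathcal{Q}_{(1)}\geq\cdots\geq\mathcal{Q}_{(n-1)}$ gives $p_{(i)}=\mathcal{Q}^{s}_{(i)}$ in descending order, and the hypothesis $\mathcal{Q}^{s}_{(i)}\geq a\,\mathcal{Q}^{s}_{(i+1)}$ is precisely the gap condition $p_{(i)}\geq a\,p_{(i+1)}$ required by Lemma \ref{lem:5}. Applying Lemma \ref{lem:5} to these $n-1$ numbers with exponent $x=\beta/s$, and using $(\mathcal{Q}^{s}_{(i)})^{\beta/s}=\mathcal{Q}^{\beta}_{(i)}$, produces
$$\Big(\sum_{i=2}^{n}\mathcal{Q}^{s}_{A_1A_i}\Big)^{\frac{\beta}{s}}\leq (1+a)^{\frac{\beta}{s}-1}\sum_{i=1}^{n-1}\Big((1+\frac{1}{a})^{\frac{\beta}{s}-1}\Big)^{n-1-i}\mathcal{Q}^{\beta}_{(i)}.$$

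Finally, raising the iterated polygamy inequality to the power $\beta/s\geq 0$ (which preserves the inequality for nonnegative quantities) and chaining it with the Lemma \ref{lem:5} estimate gives
$$\mathcal{Q}^{\beta}_{A_1|A_2\cdots A_n}=\big(\mathcal{Q}^{s}_{A_1|A_2\cdots A_n}\big)^{\frac{\beta}{s}}\leq\Big(\sum_{i=2}^{n}\mathcal{Q}^{s}_{A_1A_i}\Big)^{\frac{\beta}{s}}\leq(1+a)^{\frac{\beta}{s}-1}\sum_{i=1}^{n-1}\Big((1+\frac{1}{a})^{\frac{\beta}{s}-1}\Big)^{n-1-i}\mathcal{Q}^{\beta}_{(i)},$$
which is the claimed bound. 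The only real care point, though it is mild, is verifying that the descending order and the gap hypothesis stated for the $s$-powers $\mathcal{Q}^{s}_{(i)}$ translate verbatim into the hypotheses of Lemma \ref{lem:5} for the $p_{(i)}$, so that the lemma may be invoked with $n-1$ terms and exponent $\beta/s$; once the substitution $p_{i}=\mathcal{Q}^{s}_{A_1A_i}$ is fixed, the remainder is just index bookkeeping.
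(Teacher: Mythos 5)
Your proposal is correct and follows essentially the same route as the paper's own proof: iterate the tripartite polygamy relation \eqref{poly} to obtain $\mathcal{Q}^{s}_{A_1|A_2\cdots A_n}\leq \sum_{j=1}^{n-1}\mathcal{Q}^{s}_{(j)}$, then apply Lemma \ref{lem:5} with exponent $x=\beta/s\geq 1$ to the $n-1$ ordered quantities $\mathcal{Q}^{s}_{(j)}$. The bookkeeping you flag (invoking the lemma with $n-1$ terms so the weights carry exponent $n-1-i$) is handled identically in the paper.
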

\begin{proof}
Since $\mathcal{Q}_{A \mid B C}^{s} \leq \mathcal{Q}_{A B}^{s}+\mathcal{Q}_{A C}^{s}$, we have
\begin{equation}
\mathcal{Q}^{s}_{A_1|A_2...A_n}\leq \mathcal{Q}^{s}_{A_1|A_2}+\mathcal{Q}^{s}_{A_1|A_3...A_n}\leq\cdots\leq \sum_{i=2}^n\mathcal{Q}^{s}_{A_1|A_i}=\sum_{j=1}^{n-1}\mathcal{Q}^{s}_{(j)}.
\end{equation}
By Lemma \ref{lem:5} we have
\begin{equation}
\begin{aligned}
\mathcal{Q}^{\beta}_{A_1|A_2...A_n}&=\left(\mathcal{Q}^{s}_{A_1|A_2...A_n}\right)^{\frac{\beta}{s}}\leq \left(\sum_{j=1}^{n-1}\mathcal{Q}^{s}_{(j)}\right)^{\frac{\beta}{s}}\\
&\leq(1+{a})^{\frac{\beta}{s}-1}\sum_{i=1}^{n-1} \left((1+\frac{1}{a})^{\frac{\beta}{s}-1}\right)^{n-1-i}\mathcal{Q}_{(i)}^{{\beta}}
\end{aligned}
\end{equation}
\end{proof}

The general monogamy relations can be applied to any quantum correlation measure such as the concurrence of assistance, square of convex-roof extended negativity of assistance (SCRENoA), and entanglement of assistance etc. Correspondingly new class of (weighted) polygamy relations are obtained. In the following, we take the SCRENoA as an example.

The negativity of bipartite state $\rho_{A_1A_2}$ is defined by \cite{GRF}:
$N(\rho_{A_1A_2})=(||\rho_{A_1A_2}^{T_{A_{1}}}||-1)/2$,
where $\rho_{A_1A_2}^{T_{A_1}}$ is the partial transposition with respect to the subsystem $A_1$ and $||X||=\mathrm{Tr}\sqrt{XX^\dag}$ is the trace norm of $X$.
 For convennience, we use the following definition of negativity, $ N(\rho_{A_1A_2})=||\rho_{A_1A_2}^{T_{A_{1}}}||-1$.
For any bipartite pure state $|\psi\rangle_{A_1A_2}$, the negativity $ N(\rho_{A_1A_2})$ is given by
$$N(|\psi\rangle_{A_1A_2})=2\sum_{i<j}\sqrt{\lambda_i\lambda_j}=(\mathrm{Tr}\sqrt{\rho_{A_1}})^2-1,$$
where $\lambda_i$ are the eigenvalues for the reduced density matrix $\rho_A$ of $|\psi\rangle_{A_1A_2}$. For a mixed state $\rho_{A_1A_2}$, the square of convex-roof extended negativity (SCREN) is defined by
 $N_{sc}(\rho_{A_1A_2})=[\mathrm{min}\sum_ip_iN(|\psi_i\rangle_{A_1A_2})]^2$,
where the minimum is taken over all possible pure state decompositions $\{p_i,~|\psi_i\rangle_{A_1A_2}\}$ of $\rho_{A_1A_2}$. The SCRENoA is then defined by $N_{sc}^a(\rho_{A_1A_2})=[\mathrm{max}\sum_ip_iN(|\psi_i\rangle_{A_1A_2})]^2$, where the maximum is taken over all possible pure state decompositions $\{p_i,~|\psi_i\rangle_{A_1A_2}\}$ of $\rho_{A_1A_2}$. For convenience, we denote ${N_a}_{A_1A_i}=N_{sc}^a(\rho_{A_1A_i})$ the SCRENoA of $\rho_{A_1A_i}$ and ${N_a}_{A_1|A_2\cdots A_{n}}=N^a_{sc}(|\psi\rangle_{A_1|A_2\cdots A_{n}})$.

\begin{corollary}  Let $s\in (0, 1)$ be the fixed number so that the SCRENoA satisfying the generalized polygamy relation \eqref{poly}.
Suppose ${N_a}^s_{A_1A_3}\geq a{N_a}^s_{A_1A_2}$
for $a\geq 1$ on a $2\otimes2\otimes2^{N-2}$ tripartite mixed state $\rho$, then the SCRENoA satisfies
\begin{eqnarray}\label{co31}
{N_a}^\beta_{A_1|A_2A_3}\leq(1+{a})^{\frac{\beta}{s}-1}{N_a}^\beta_{A_1A_2}+(1+\frac{1}{a})^{\frac{\beta}{s}-1} {N_a}^\beta_{A_1A_3}
\end{eqnarray}
for $\beta\geq \delta$.
\end{corollary}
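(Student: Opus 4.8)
The plan is to obtain this statement as a direct specialization of the first polygamy theorem of this section, taking the generic bipartite measure $\mathcal{Q}$ to be the SCRENoA $N_a$. By hypothesis $s\in(0,1)$ is chosen so that $N_a$ obeys the base polygamy relation \eqref{poly} on the $2\otimes2\otimes2^{N-2}$ state $\rho$, which is precisely the structural input the general theorem demands; the remaining assumption ${N_a}^s_{A_1A_3}\geq a\,{N_a}^s_{A_1A_2}$ with $a\geq 1$ supplies the ordering condition on the two reduced correlations. (I read the exponent range as $\beta\geq s$; the ``$\delta$'' in \eqref{co31} appears to be a typo for $s$.)

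Concretely, I would first write ${N_a}^\beta_{A_1|A_2A_3}=\big({N_a}^s_{A_1|A_2A_3}\big)^{\frac{\beta}{s}}$ and apply \eqref{poly} underneath the $\frac{\beta}{s}$-power, using that $\frac{\beta}{s}\geq 1$ makes $u\mapsto u^{\beta/s}$ monotone increasing. This gives ${N_a}^\beta_{A_1|A_2A_3}\leq\big({N_a}^s_{A_1A_2}+{N_a}^s_{A_1A_3}\big)^{\frac{\beta}{s}}$. Factoring out ${N_a}^s_{A_1A_2}$ rewrites the right-hand side as ${N_a}^\beta_{A_1A_2}\,(1+t)^{\frac{\beta}{s}}$ with $t={N_a}^s_{A_1A_3}/{N_a}^s_{A_1A_2}$, and the ordering hypothesis guarantees $t\geq a\geq 1$, which is exactly the admissible range needed below.

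The key step is then to invoke Lemma \ref{lem:4} with $x=\frac{\beta}{s}\geq 1$ and this value of $t$, yielding $(1+t)^{\frac{\beta}{s}}\leq(1+a)^{\frac{\beta}{s}-1}+(1+\frac{1}{a})^{\frac{\beta}{s}-1}t^{\frac{\beta}{s}}$; multiplying through by ${N_a}^\beta_{A_1A_2}$ and using $t^{\frac{\beta}{s}}{N_a}^\beta_{A_1A_2}={N_a}^\beta_{A_1A_3}$ reproduces \eqref{co31} verbatim. I expect no genuine obstacle in the computation, which is entirely routine once Lemma \ref{lem:4} is available. The only substantive point---and the one that should be taken as given rather than re-derived here---is the validity of the underlying polygamy inequality \eqref{poly} for $N_a$ on $2\otimes2\otimes2^{N-2}$ states. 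This is precisely why the statement is phrased as ``let $s$ be the fixed number so that the SCRENoA satisfies \eqref{poly}''; to discharge that hypothesis cleanly I would simply cite the known assisted-negativity polygamy results (e.g.\ \cite{jinzx,j012334}) that establish \eqref{poly} for this measure and dimension.
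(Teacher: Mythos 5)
Your proposal is correct and follows essentially the same route as the paper: the corollary is obtained there by specializing the tripartite polygamy theorem (whose proof is exactly your chain --- raise \eqref{poly} to the power $\beta/s\geq 1$, factor out ${N_a}^{s}_{A_1A_2}$, and apply Lemma \ref{lem:4} with $t={N_a}^s_{A_1A_3}/{N_a}^s_{A_1A_2}\geq a\geq 1$) to the measure $\mathcal{Q}=N_a$, with the base relation \eqref{poly} for SCRENoA taken as the standing hypothesis, and you correctly identify ``$\beta\geq\delta$'' as a typo for ``$\beta\geq s$''.
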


By induction, the following result is immediate for a multiqubit quantum state $\rho_{A_1A_2\cdots A_{n}}$. 
\begin{corollary}  Let $s\in (0, 1)$ be the fixed number so that the SCRENoA satisfying the generalized polygamy relation \eqref{poly}.
Let $N_{a(1)}\geq N_{a(2)}\geq...\geq N_{a(n-1)}$ be a reordering of $N_{aA_1A_j}$, $j=2,...,n$.
If $N_{a(i)}^{s}\geq a N_{a(i+1)}^s$ for $a$ and $i=1,...,n-2$, then we have
\begin{equation}
N^{\beta}_{aA_1|A_2...A_n}\leq(1+{a})^{\frac{\beta}{s}-1}\sum_{i=1}^{n-1} \left((1+\frac{1}{a})^{\frac{\beta}{s}-1}\right)^{n-1-i}N_{a(i)}^{{\beta}}
\end{equation}
for $\beta\geq s$. 
\end{corollary}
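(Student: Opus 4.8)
The plan is to recognize that this corollary is simply the specialization of the general $n$-partite polygamy theorem (the immediately preceding Theorem stated for an arbitrary measure $\mathcal{Q}$) to the concrete choice $\mathcal{Q}=N_a$, the SCRENoA. Consequently I would re-run that theorem's two-step argument verbatim, the only prerequisite being that SCRENoA obeys the base polygamy inequality \eqref{poly} for the given $s$, which is granted by hypothesis. So the whole task reduces to confirming that each step of the general proof survives under this substitution, with $a\geq 1$ as required by Lemma \ref{lem:5}.

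First I would establish the subadditive bound by iterating \eqref{poly}. Grouping the tail $A_3\cdots A_n$ into a single subsystem, \eqref{poly} gives $N^s_{aA_1|A_2\cdots A_n}\leq N^s_{aA_1A_2}+N^s_{aA_1|A_3\cdots A_n}$; peeling off one party at a time yields $N^s_{aA_1|A_2\cdots A_n}\leq\sum_{j=2}^n N^s_{aA_1A_j}=\sum_{j=1}^{n-1}N^s_{a(j)}$, where on the right the $n-1$ terms are relabeled by their sorted values $N_{a(1)}\geq\cdots\geq N_{a(n-1)}$.

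Next, since $\beta\geq s$ the exponent $x:=\beta/s$ satisfies $x\geq 1$, so raising the previous inequality to the power $x$ preserves its direction. I would then apply Lemma \ref{lem:5} to the $n-1$ nonnegative numbers $p_i:=N^s_{a(i)}$: the chain condition $p_{(i)}\geq a\,p_{(i+1)}$ demanded by Lemma \ref{lem:5} is precisely the hypothesis $N^s_{a(i)}\geq a N^s_{a(i+1)}$, and since the $N_{a(i)}$ are already in descending order the reordering is trivial. Lemma \ref{lem:5} then delivers
\[
\Big(\sum_{i=1}^{n-1}p_i\Big)^x\leq(1+a)^{x-1}\sum_{i=1}^{n-1}\big((1+\tfrac{1}{a})^{x-1}\big)^{n-1-i}p_i^{\,x},
\]
and substituting $x=\beta/s$ together with $p_i^{\,x}=(N^s_{a(i)})^{\beta/s}=N^\beta_{a(i)}$ reproduces exactly the claimed bound.

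The argument is otherwise routine, so the points requiring genuine care are bookkeeping checks rather than a real obstacle. The first is that \eqref{poly}, stated for tripartite states, is being invoked with the coarse-grained subsystem $A_3\cdots A_n$ playing the role of a single party; I would note, as is standard for SCRENoA, that its polygamy relation is stable under such coarse-graining, which legitimizes the iteration. The second is ensuring the ordering hypothesis still matches Lemma \ref{lem:5} after relabeling, but since the $N_{a(i)}$ are supplied pre-sorted and fixed, no ordering is disturbed by the iteration, and the inequality $x\geq 1$ is the only analytic condition to verify. Thus the entire proof is a clean transcription of the general polygamy theorem to the measure $N_a$.
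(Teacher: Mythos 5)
Your proposal is correct and matches the paper's approach: the paper treats this corollary as immediate from the preceding $n$-partite polygamy theorem (whose proof iterates \eqref{poly} to get $N^{s}_{aA_1|A_2\cdots A_n}\leq\sum_{j=1}^{n-1}N^{s}_{a(j)}$ and then applies Lemma \ref{lem:5} with $x=\beta/s\geq 1$), which is exactly the two-step specialization you carry out for $\mathcal{Q}=N_a$. No gaps; your extra bookkeeping remarks only make explicit what the paper leaves implicit.
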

\begin{example} Let us consider the three-qubit generlized $W$-class state,
\begin{eqnarray}\label{W}
|W\rangle_{A_1A_2A_3}=\frac{1}{2}(|100\rangle+|010\rangle)+\frac{\sqrt{2}}{2}|001\rangle.
\end{eqnarray}
 Then ${N_a}_{A_1|A_2A_3}=\frac{3}{4}$, ${N_a}_{A_1A_2}=\frac{1}{4},~{N_a}_{A_1A_3}=\frac{1}{2}$. 
 Let $a=2^{0.6}$, then the upper bound given in \cite{JFQ} is $$W_1=(\frac{1}{4})^\beta+\frac{(1+2^{0.6})^\frac{\beta}{s}-1}{(2^{0.6})^{\frac{\beta}{s}}}(\frac{1}{2})^\beta,$$
the upper bound given in \cite{ZJZ1,ZJZ2} is
 $$W_2=(\frac{1}{2})^{\frac{\beta}{s}}(\frac{1}{4})^\beta+\frac{(1+2^{0.6})^\frac{\beta}{s}-(\frac{1}{2})^{\frac{\beta}{s}}}{(2^{0.6})^{\frac{\beta}{s}}}(\frac{1}{2})^\beta,$$
 while our upper bound is $$W_3=(1+2^{0.6})^{\frac{\beta}{s}-1}(\frac{1}{4})^\beta+(1+2^{-0.6})^{\frac{\beta}{s}-1} (\frac{1}{2})^\beta.$$

 Fig. 2 charts our bound together with other bounds, and Fig. 3 and Fig. 4 show the comparison of our bound with those given in \cite{JFQ,ZJZ1,ZJZ2}. Our bound is found to be stronger than the other two.
\end{example}
\begin{figure}[H]
\centerline{\includegraphics[width=0.6\textwidth]{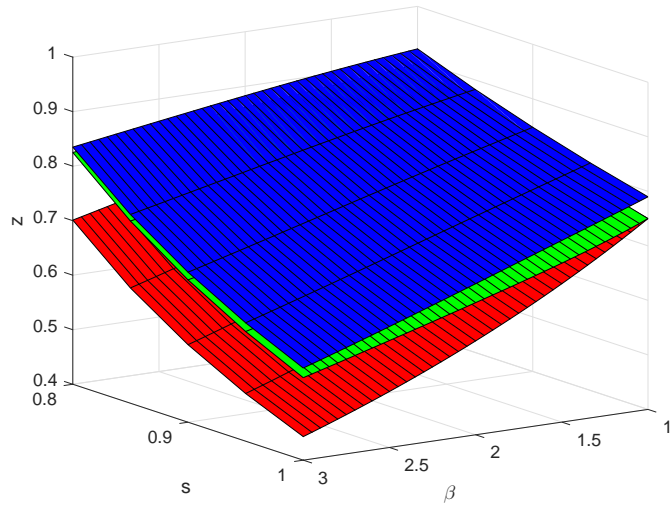}}
\renewcommand{\figurename}{Fig.}
\caption{The z-axis presents SCRENoA for the state $|W\rangle_{A_1A_2A_3}$ as a function of $\beta, s$. The red, green and blue surfaces chart
our upper bound, the upper bound of \cite{JFQ} and the upper bound of \cite{ZJZ1,ZJZ2} respectively.}
\end{figure}

\begin{figure}[H]
\centerline{\includegraphics[width=0.6\textwidth]{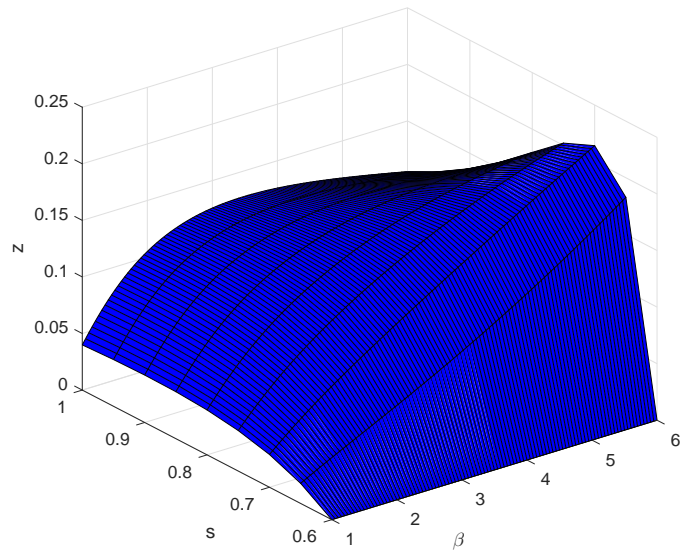}}
\renewcommand{\figurename}{Fig.}
\caption{The surface depicts our upper bound of SCRENoA minus that given by \cite{JFQ}.}
\end{figure}

\begin{figure}[H]
\centerline{\includegraphics[width=0.6\textwidth]{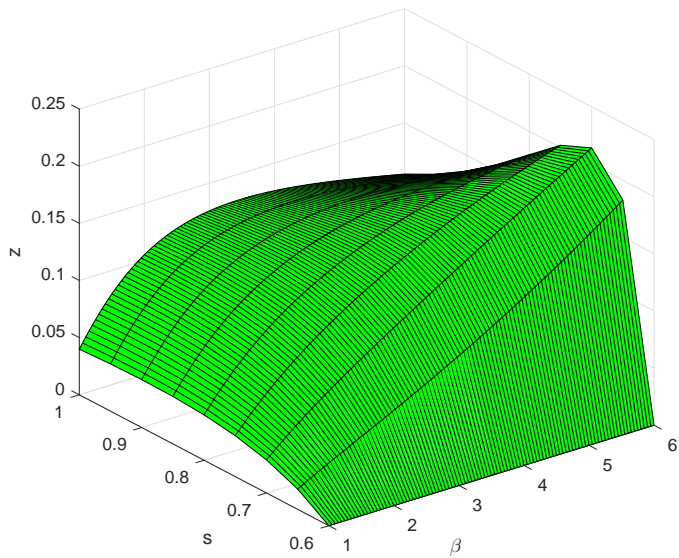}}
\renewcommand{\figurename}{Fig.}
\caption{The surface depicts our upper bound of SCRENoA minus that of \cite{ZJZ1,ZJZ2}.}
\end{figure}

\bigskip

\section{Conclusion}
Monogamy relations reveal special properties of correlations in terms of inequalities satisfied by various quantum measurements of the subsystems.
In this paper, we have examined the physical meanings and mathematical formulations related to monogamy and polygamy relations
in multipartite quantum systems. By grossly generalizing a technical inequality for the function $(1+t)^x$, we have obtained general stronger weighted
monogamy and polygamy relations for any quantum measurement such as concurrence, negativity, entanglement of formation etc. as well as
the Tsallis-$q$ entanglement and R\'enyi-$q$ entanglement measures. We have shown rigorously that our bounds outperform some of the strong bounds found recently
in a unified manner, notably that our results are not only stronger for monogamy relations but also polygamy relations. We have also used the concurrence and the SCRENoA
(square of convex-roof extended negativity of assistance) to show that our bounds are indeed better than the
recently available bounds through detailed examples and charts in both situations.

\bigskip

\noindent{\bf Acknowledgments}

This work is partially supported by Simons Foundation grant no. 523868
and National Natural Science Foundation of China grant no. 12126351.

\bigskip

\noindent\textbf {Data Availability Statements} All data generated or analysed during this study are included in the paper.

\bigskip
\bibliography{Bib}

\end{document}